\newtheorem{lemma}{Lemma}
\newtheorem{remark}{Remark}
\newtheorem{assumption}{Assumption}
\newtheorem{propo}{Proposition}
\newcommand{\Gc}{\mathcal{G}}
\newcommand{\Pc}{\mathcal{P}}
\newcommand{\Pv}{{\bf P}}
\newcommand{\kh}{{\hat{k}}}
\def\a{\alpha}
\def\b{\beta}
\def\d{\delta}
\def\eps{\epsilon}
\def\l{\lambda}
\newcommand{\U}{\mathrm{Unif}}
\def\textiid{i.i.d.\@\xspace}
\newcommand\iid{\ifmmode\text{ i.i.d. } \else \textiid \fi}
\newcommand{\ind}{\boldsymbol{1}}
\begin{document}
\title{Communication-Efficient Distributed Multiple Testing for Large-Scale Inference} 

 \author{%
   \IEEEauthorblockN{Mehrdad Pournaderi and Yu Xiang}
   \IEEEauthorblockA{University of Utah\\
                     50 Central Campus Dr 2110, Salt Lake City, USA\\
                     Email: \{mehrdad.pournaderi, yu.xiang\}@utah.edu}
 }

\maketitle
\begin{abstract}
 The Benjamini-Hochberg (BH) procedure is a celebrated method for multiple testing with false discovery rate (FDR) control. In this paper, we consider large-scale distributed networks where each node possesses a large number of p-values and the goal is to achieve the global BH performance in a communication-efficient manner. 
We propose that every node performs a local test with an adjusted test size according to the (estimated) global proportion of true null hypotheses. With suitable assumptions, our method is asymptotically equivalent to the global BH procedure. 
Motivated by this, we develop an algorithm for star networks where each node only needs to transmit an estimate of the (local) proportion of nulls and the (local) number of p-values to the center node; the center node then broadcasts a parameter (computed based on the global estimate and test size) to the local nodes. 
In the experiment section, we utilize existing estimators of the proportion of true nulls and consider various settings to evaluate the performance and robustness of our method.
%
 
 
\end{abstract}


\section{Introduction}

\label{sec:intro}
%
Motivated by microarray applications, the multiple testing problem has been the subject of active research in several fields. Since the formal introduction of the false discovery rate (FDR) control and the Benjamini-Hochberg (BH) procedure in~\cite{benjamini1995}, they have attracted significant interest in areas such as genomics and neuroimaging~\cite{genovese2002thresholding,abramovich2006adapting,efron2012large}, with many fundamental extensions~\cite{benjamini2001control,efron2001empirical,storey2002direct,sarkar2002some,genovese2002operating,efron2012large}.

In this work, we focus on distributed inference problems for FDR control based on the BH procedure, which is different from the existing literature on distributed detection and hypothesis testing formulations~\cite{tenney1981,tsitsiklis1984,viswanathan1997,blum1997}. Our work is largely motivated by a recent distributed FDR control formulation~\cite{Ramdas2017b} that leads to the Query-Test-Exchange (QuTE) algorithm with FDR control. In particular, the QuTE algorithm requires each node $i$ to transmit its p-values to all of its neighbor nodes $c\ge 1$ times to obtain p-values that are at nodes $c$ steps away (i.e., nodes connected by $c-1$ edges from node $i$). However, for a large-scale network with numerous nodes that each possesses, e.g., $10^6$ p-values, the amount of required communication quickly becomes infeasible or costly.
In fact, there is a trade-off between the amount of communication in the network and the statistical power of this algorithm. When no communication is allowed, each node performs the BH procedure (with a corrected test size) on its own p-values, leading to a Bonferroni type test; on the other hand, if unlimited communication is allowed in a connected network, at the end of the algorithm each node would have the entire set of p-values in the network, resulting in performing the global BH at each node. A natural question in a distributed network is whether it is possible to achieve FDR control with good power in a communication-efficient manner. 

To shed some light on this challenge, we take an asymptotic perspective and propose an aggregation method that is extremely communication-efficient. Our method is asymptotically equivalent to the global BH procedure, while reducing the communication cost to essentially two real-valued number per node. Even though the asymptotic analysis requires some assumptions, the proposed method is shown to be stable and robust in the finite sample regime according to our extensive simulation studies. A related attempt to reduce the communication cost of the QuTE algorithm has been made in~\cite{xiang2019distributed} but it still requires transmitting the quantized version of all the p-values in a network. Under a different measure (probability distribution of false alarms under FDR control), the authors in~\cite{ray2011false} study a distributed sensor network setting for detection of a single target in a region of interest, where each sensor node has only one p-value.

The rest of the paper is organized as follows. Section~II presents the background and problem formulation. In Section~III, we present our distributed BH algorithm along with the asymptotic analysis. We illustrate the performance of our method via a variety of simulations in Section~IV. 
%

%
\section{Background and Problem Settings}
\label{sec:background}

\subsection{Multiple Testing and False Discovery Rate Control}
\label{sec:FDR}


Consider testing the hypotheses $\mathsf{H_{0,k}}$ against their corresponding alternatives $\mathsf{H_{1,k}}$ for $1 \leq k\leq m$, according to the test statistics $X_k, 1 \leq k\leq m$, where $m_0$ of them are generated according to the null hypotheses.
Let $P_k = p(X_k)$, $1\le k\le m$, denote the p-values computed under the null hypothesis $\mathsf{H_{0,k}}$. The goal of multiple testing is to test the $m$ hypotheses while controlling a simultaneous measure of type I error. A rejection procedure controls a measure of error at some prefixed level $\alpha$, for $0<\a<1$, if it guarantees that the error is at most $\alpha$. The two major approaches for simultaneous error control are family-wise error rate (FWER) control and false discovery rate (FDR) control. FWER concerns with controlling the probability of making at least one false rejection, while FDR is a less stringent measure of error that aims to control the \emph{proportion} of false rejections among all rejections in expectation. For $m$ realizations of $p$-values, $(p_1,p_2,\ldots, p_m)$, denote the ordered $p$-values by  $(p_{(1)},p_{(2)},\ldots, p_{(m)})$, where $p_{(1)}$ and $p_{(m)}$ denote the smallest and largest $p$-values, respectively. Let $R$ and $V$ denote the number of rejections and false rejections by some procedure, respectively. Then, FWER is defined as $\mathbb{P}(V>0)$ and it can be controlled at some target level $\alpha$ by rejecting $\{k: p_k \leq \alpha/m\}$, known as Bonferroni correction. On the other hand, FDR is defined as
\begin{equation}
	\text{FDR} = \mathbb{E}\left(\frac{V}{\max\{R,1\}}\right),
\end{equation}
and the celebrated Benjamini-Hochberg (BH) procedure~\cite{benjamini1995} controls the $\text{FDR}$ at level $\a$ by rejecting the $\hat{k}$ smallest p-values, where $\kh = \max\big\{0\leq k\leq m: p_{(k)}\le \tau_k \big\}$ with $\tau_k=\a k/m$ and $p_{(0)}=0$. The rejection threshold is usually denoted by $\tau_{\text{BH}}: = \tau_{\hat{k}}$.

\subsection{Distributed False Discovery Control: Problem Setting}
\label{sec:distributed}
In the distributed scheme we are going to propose, our goal is to attain the global BH procedure performance, asymptotically. By global performance, we mean rejecting according to the centralized (pooled) BH threshold. As a result, we get the same FDR and power as the centralized BH procedure, where
\begin{equation}
    \text{power}=\mathbb{E}\left(\frac{R-V}{\max\{m_1,1\}}\right),
\end{equation}
and $m_1=m-m_0$ denotes the total number of true alternatives.
We follow the mixture model formulation~\cite{genovese2002operating}, where a null (or alternative) hypothesis is true with probability $r_0$ (or $r_1$). 

Consider a network consisting of $N$ nodes. Suppose the p-values in the network are generated $\iid$ according to the (mixture) distribution function $\Gc(t)=\sum_{i=1}^N{q^{(i)}\,G\big(t;r_0^{(i)}\big)}$, where a p-value is generated in node $i$ with probability $q^{(i)}$ and according to the distribution function $G\big(t;r_0^{(i)}\big) = r_0^{(i)}\,U(t) + (1-r_0^{(i)}) F(t),\ 0 \leq r_0^{(i)} < 1$, with $U$ denoting the CDF of p-values under their corresponding null hypotheses, (which is $\U[0, 1]$,) and $F$ denoting the common (but \emph{unknown}) distribution function of the p-values under $\mathsf{H_1}$. 
\begin{assumption}
Let $m$ denote the number of p-values in the network. For all nodes $i\in \{1,...,N\}$, we assume $r_0^{(i)}(m)$ is fixed, i.e., $r_0^{(i)}(m)=r_0^{(i)}<1$ for all $m\in\mathbb{N}$, and $q^{(i)}_m\rightarrow q^{(i)}$ as $m\rightarrow \infty$ where $\sum_{m=1}^\infty{q^{(i)}_m}=\infty$.\footnote{We make the assumption $\sum_{m=1}^\infty{q^{(i)}_m}=\infty$ to ensure $m^{(i)}\,\rightarrow\,\infty$ for all\ $1\leq i\leq N$ (with probability 1 as $m\to \infty$) by the Borel-Cantelli lemma. 
So this assumption can be interpreted as we ignore the nodes with finite number of p-values in the limit.  }\label{ass:fixed}
\end{assumption}
Under Assumption \ref{ass:fixed}, we have $\Gc_m(t) = G\big(t;r_0^*(m)\big)$, where $r_0^*(m)=\sum_{i=1}^{N}{q^{(i)}_m\,r_0^{(i)}}$ is the global probability of true nulls. Also, $r_0^*(m)\rightarrow r_0^*$, where $r_0^*=\sum_{i=1}^{N}{q^{(i)}\,r_0^{(i)}}$.
Let $\tau^*_{\text{BH}}(\alpha)$ and $\tau^{(i)}_{\text{BH}}(\alpha^{(i)})$ denote the global and local ($i$-th node) rejection thresholds for some test sizes $\alpha$ and $\alpha^{(i)}$, respectively.
Under suitable assumptions, Theorem 1 in \cite{genovese2002operating} argues that
\begin{align*}
    \tau^*_{\text{BH}}(\alpha)&\xrightarrow{\mathcal{P}}
    \tau(\alpha;r_0^*),\\
    \tau^{(i)}_{\text{BH}}(\alpha^{(i)})&\xrightarrow{\mathcal{P}} \tau(\alpha^{(i)};r_0^{(i)}),
\end{align*} 
where $\tau(\alpha;r_0):=\sup\big\{t:G(t;r_0)=(1/\alpha)\,t\big\}$.
On the other hand, we have
\begin{equation*}
    \sup\{t:G(t;r_0)=(1/\alpha)\,t\}=\sup\{t:F(t)=\beta(\alpha;r_0)\,t\}
\end{equation*}
for $0 \leq r_0<1$, where 
 \begin{align*}
 		\beta(\alpha;r_0) := \frac{(1/\a)-{r}_0}{1-{r}_0}\ .
 \end{align*}
Therefore, $\tau^*_{\text{BH}}$ admits a characterization via $F(t)$ and $\beta(\alpha;r_0^*)$ in the limit (and same for $\tau^{(i)}_{\text{BH}}$).
This (asymptotic) representation
leads to a key observation: Each node can leverage the global proportion of true nulls $r_0^*$ (provided to them) to achieve the global performance by calibrating its (local) test size $\alpha^{(i)}$ such that $\beta(\alpha^{(i)};r_0^{(i)})=\beta(\alpha;r_0^*)=:\beta^*$ 
to reach the global threshold. 
Specifically, it is straightforward to observe that setting $\a^{(i)} := \big((1-{r}_0^{(i)})\beta(\alpha;r_0^*) + {r}_0^{(i)}\big)^{-1}$ in $\beta(\alpha^{(i)};r_0^{(i)})$ at node $i$ will result in the global performance asymptotically.
We will formally present our \emph{distributed BH method} in the next section.

\section{Communication-Efficient Distributed BH}
\label{sec:quantized-FDR}
Consider a network consisting of one center node and $N$ other nodes, where the center node collects (or broadcasts) information from (or to) all the other $N$ nodes. Each node $i$, $i\in \{1,...,N\}$, owns $m^{(i)}$ 
p-values $\Pv^{(i)}=(P_1^{(i)},..., P_{m^{(i)}}^{(i)})$, where $m^{(i)}_0$ (or $m^{(i)}_1$) of them correspond to the true null (or alternative) hypotheses with $m^{(i)}_0 + m^{(i)}_1 = m^{(i)}$. 
Let $\hat{r}^{(i)}_0$ denote an estimator of $r^{(i)}_0$. In particular, in the simulation section, we will focus on two estimators of the upper bound of $r^{(i)}_0$~\cite{storey2002direct,swanepoel1999limiting}. 
\subsection{Algorithm}
For an overall targeted FDR level $\alpha$, our \emph{distributed BH method} consists of three main steps. 
\begin{itemize}\setlength\itemsep{0.5em}
	\item[(1)] {\bf Collect p-values counts}: Each node~$i$ estimates $\hat{r}^{(i)}_0$ and then sends $\big(m^{(i)}, \hat{r}^{(i)}_0\big)$ to the center node.
	\item[(2)] {\bf Estimate global slope ($\b^*$)}: Based on $\big(m^{(i)}, \hat{r}^{(i)}_0\big)$, $i\in\{1,...,N\}$, the center node computes $\hat\beta^*$ and broadcasts it to all the nodes, where
	\begin{align*}
		\hat\beta^* = \frac{(1/\a)-\hat{r}^*_0}{1-\hat{r}^*_0}\geq 1,
	\end{align*}
	 with $\hat{r}^*_0= \frac{1}{m}\sum_{i=1}^N \hat{r}_0^{(i)}m^{(i)}$ and $m=\sum_{i=1}^N m^{(i)}$.
	\item[(3)] {\bf Perform BH locally}: Upon receiving $\hat\b^*$, each node computes its own $\hat\a^{(i)}$ and performs the BH procedure according to $\hat\a^{(i)}$, where
	\begin{align*}
		\hat\a^{(i)} = \frac{1}{\big(1-\hat{r}_0^{(i)}\big)\hat\b^* + \hat{r}_0^{(i)}}\leq 1.
	\end{align*}
\end{itemize}

\begin{remark}
We note that in the case of having only one node in the network ($N=1$), we get $\hat{r}^*_0= \hat{r}_0^{(1)}$ and the algorithm reduces to the BH procedure, i.e., $\hat\a^{(1)}=\alpha$.
\end{remark}

\subsection{Estimation of $r_0^{(i)}$}
To begin with, we briefly review the two estimators of $r_0^{(i)}$ we use in this paper. Let $P_{(1)}<...< P_{(m)}$ denote the ordered p-values at some node, generated \iid from $G(t;r_0)$ (defined in \ref{sec:distributed}) and $\mathsf{G}_m(t)$ denote the empirical CDF of $(P_1,..., P_m)$. The problem of estimating the ratio of alternatives, $r_1$, has been studied extensively~\cite{hochberg1990more,hengartner1995finite,swanepoel1999limiting,benjamini2000adaptive,efron2001empirical,storey2002direct}. We adopt the following two estimators that are \emph{strongly consistent} estimators of some upper bounds of $r_0$, as suggested in~\cite{genovese2004stochastic}: 

\smallskip
\noindent{{\bf Storey's estimator}}~\cite{storey2002direct}: for any $\l\in (0,1)$, 
	\begin{align*}
		\hat{r}^{\text{Storey}}_0 = \min\biggl\{\frac{1-\mathsf{G}_m(\l)}{1-\l}, \,1\biggr\};
	\end{align*} 

\noindent{{\bf Spacing estimator}}~\cite{swanepoel1999limiting}: 
\begin{equation*}
		\hat{r}^{\text{spacing}}_0 = \min\biggl\{\frac{2r_m}{m V_m},1\biggr\},
	\end{equation*}
where $r_m = m^{4/5}(\log(m)^{-2l})$, for any $l>0$, and 
	\begin{align*}
		V_m = \max_{r_m+1\le j\le m-r_m} (P_{(j+r_m)}-P_{(j-r_m)}).
	\end{align*}

In fact, both estimators converge almost surely to upper bounds of $r_0$ (known to be close to $r_0$~\cite{genovese2004stochastic}).
This follows from some simple algebra combined with $\mathsf{G}_m(\l)\xrightarrow{a.s.} G(\lambda;r_0)$ by the strong law of large numbers for Storey's estimator; and from $(2r_m)/(m V_m)\xrightarrow{a.s.} \min_{0<t<1} (r_0+r_1 f(t))\geq r_0$ for the spacing estimator, where $f(t)$ denotes the density function of $F(t)$. Under slightly different conditions, another estimator is proposed in~\cite{hengartner1995finite}. We skip it in this work due to its similar asymptotic behavior to the spacing estimator. 
\subsection{Asymptotic Equivalence to the Global BH}
\begin{lemma}
Under Assumption \ref{ass:fixed}, if $\hat{r}_0^{(i)}\xrightarrow{a.s.}\overline{r}_0^{\,(i)}$ as $m^{(i)}\rightarrow\infty$, for all $1\leq i\leq N$, then we have 
\begin{subequations}
\begin{equation}
    \hat{r}^*_0\xrightarrow{a.s.}\overline{r}^*_0\ ,\quad
    \hat\beta^*\xrightarrow{a.s.}\overline{\beta}^*\ ,\quad
    \hat\a^{(i)}\xrightarrow{a.s.}\widetilde{\alpha}^{(i)}\ ,
\end{equation}
\end{subequations}
as $m\rightarrow\infty$, where
\begin{subequations}
\begin{align}
   \overline{r}^*_0 &=\sum_{i=1}^{N}{\overline{r}_0^{\,(i)}q^{(i)}}\ ,\\
    \overline{\beta}^* &= \frac{({1}/{\alpha})-\overline{r}^*_0}{1-\overline{r}^*_0} \ ,\\
    \widetilde{\alpha}^{(i)} &= \Big({\big(1-\overline{r}_0^{\,(i)}\big)\overline{\beta}^*+\overline{r}_0^{\,(i)}}\Big)^{-1}\ .
\end{align}
\end{subequations}
\end{lemma}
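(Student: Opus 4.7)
The plan is to reduce the three conclusions to almost-sure convergence of the local fractions $m^{(i)}/m$ and to the hypothesized convergence $\hat r^{(i)}_0 \xrightarrow{a.s.}\overline r^{(i)}_0$, after which the continuous mapping theorem propagates limits through the arithmetic that defines $\hat r^*_0$, $\hat\beta^*$ and $\hat\alpha^{(i)}$.

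First I would establish two basic almost-sure facts about the allocation of p-values to nodes: (a) $m^{(i)}\to\infty$ a.s., and (b) $m^{(i)}/m\to q^{(i)}$ a.s. Fact (a) is exactly the footnote's Borel--Cantelli remark: the events ``the $k$-th p-value is at node $i$'' are independent with probabilities $q^{(i)}_k$, and $\sum_k q^{(i)}_k = \infty$ forces infinitely many occurrences. For (b), write $m^{(i)} = \sum_{k=1}^{m} \xi^{(i)}_k$ with independent $\xi^{(i)}_k\sim\mathrm{Bernoulli}(q^{(i)}_k)$. A Ces\`aro argument shows $\mathbb{E}[m^{(i)}]/m = m^{-1}\sum_{k=1}^m q^{(i)}_k \to q^{(i)}$, and Hoeffding's inequality gives $\mathbb{P}\bigl(|m^{(i)} - \mathbb{E}[m^{(i)}]| > m\varepsilon\bigr) \le 2e^{-2m\varepsilon^2}$. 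The right-hand side is summable in $m$, so Borel--Cantelli yields $(m^{(i)}-\mathbb{E}[m^{(i)}])/m \to 0$ a.s., and combining this with the Ces\`aro limit gives (b).

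Having (a) in hand, the hypothesis $\hat r_0^{(i)} \xrightarrow{a.s.} \overline r_0^{(i)}$ as $m^{(i)}\to\infty$ transfers to a.s.\ convergence as $m\to\infty$ on the single full-measure event where $m^{(i)}\to\infty$ for every $i$; this event has full measure because there are only finitely many nodes to intersect over. Then
\begin{equation*}
    \hat r^*_0 \;=\; \sum_{i=1}^{N}\hat r_0^{(i)}\cdot \frac{m^{(i)}}{m}
\end{equation*}
is a finite sum of products of a.s.-convergent sequences, so $\hat r^*_0 \xrightarrow{a.s.}\sum_i q^{(i)} \overline r^{(i)}_0 = \overline r^*_0$. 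The remaining two conclusions then follow from the continuous mapping theorem applied to the continuous maps $r\mapsto (1/\alpha - r)/(1-r)$ on $\{r<1\}$ and $(r,\beta)\mapsto \bigl((1-r)\beta + r\bigr)^{-1}$ on $\{r<1,\,\beta\ge 1\}$; well-definedness of $\overline\beta^*$ and $\widetilde\alpha^{(i)}$ implicitly requires $\overline r_0^*, \overline r_0^{(i)} < 1$, which is consistent with the setup since $r_0^{(i)}<1$ by Assumption~\ref{ass:fixed} and the estimators under consideration converge to upper bounds of $r_0^{(i)}$ of the form discussed in Section~III.B.

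The main obstacle is really just the triangular-array step (b), since $q^{(i)}_k$ is only assumed to converge and is not constant in $k$; the Hoeffding-plus-Borel--Cantelli combination handles this cleanly. Everything else is routine propagation of almost-sure limits through continuous operations on a finite index set.
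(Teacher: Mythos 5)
Your proof is correct and follows essentially the same route as the paper: show $m^{(i)}\to\infty$ a.s.\ via Borel--Cantelli, show $m^{(i)}/m\xrightarrow{a.s.}q^{(i)}$ for the triangular Bernoulli array, and then propagate the limits through the continuous mapping theorem. The only difference is that you obtain the second step from Hoeffding's inequality plus Borel--Cantelli, whereas the paper invokes Kolmogorov's strong law of large numbers using $\sum_k k^{-2}q^{(i)}_k(1-q^{(i)}_k)<\infty$ together with the Ces\`aro limit of the means; both are valid for bounded independent summands.
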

\begin{proof}
Since $\sum_{m=1}^\infty{q^{(i)}_m}=\infty$ (and according to the independence of p-values), we have $m^{(i)}\rightarrow\infty$ with probability 1 for all $1\leq i\leq N$ as $m\rightarrow\infty$. We note that $\sum_{k=1}^\infty {{k^{-2}}{q^{(i)}_k\big(1-q^{(i)}_k\big)}}<\infty$ and ${\lim}_{m\rightarrow\infty}\big(\frac{1}{m}\sum_{k=1}^m{q^{(i)}_k}\big)=q^{(i)}$ (recall that $\lim_{m\to \infty}q^{(i)}(m)=q^{(i)}$). Therefore,  $m^{(i)}/m\xrightarrow{a.s.}q^{(i)}$ by Kolmogorov’s strong law of large numbers. Direct application of the continuous mapping theorem proves the claims.
\end{proof}

\begin{assumption}
To simplify the technical arguments, we assume that the estimators of $r_0^{(i)}$ are consistent, i.e., $\overline{r}_0^{\,(i)}=r_0^{\,(i)}$ for all $1\leq i\leq N$. \label{ass:consis}
\end{assumption}
We note that under Assumption \ref{ass:consis}, we have $\overline{r}_0^*=r_0^*$, $\overline{\beta}^*=\beta^*$, and $\widetilde{\alpha}^{(i)}={\alpha}^{(i)}$.
Recall that $F$ denotes the (common) CDF of the p-values under $\mathsf{H_1}$ and define\footnote{In general, $\{t:F(t)={\beta}^*\, t\}\neq \varnothing$ and the solutions of $F(t)=\beta\,t$ are bounded by $1/\beta \leq \alpha$. Hence, the supremum always exists.}
\begin{equation*}
    \tau^*:=\tau(\alpha;r_0^*)=\sup\{t:F(t)={\beta}^*\, t\} ,
\end{equation*}
where $\beta^*=\beta(\alpha;r_0^*)$.
\begin{assumption}
$F(t)$ is continuously differentiable at $t={\tau}^*$ and $F'({\tau}^*)\neq {\beta}^*$.
\label{concav}
\end{assumption}
\begin{remark}
It should be noted that $F$ can be a mixture (or compound) distribution \cite[Theorem 7]{genovese2002operating}. 
\end{remark}
   We note that $\beta(\alpha^{(i)};r_0^{(i)})=\beta^*$ by the definition of $\alpha^{(i)}$ (in section \ref{sec:distributed}) and as a result, $\tau(\alpha^{(i)};r_0^{(i)})=\tau^*$. Therefore, according to Lemma \ref{het} in the appendix, Assumptions \ref{ass:fixed}, \ref{ass:consis}, and \ref{concav} are sufficient to imply $\tau_{\text{BH}}^{(i)}({\alpha}^{(i)})\xrightarrow{\Pc} \tau^*$ (\emph{exponentially fast}) as $m\to\infty$, where $\tau_{\text{BH}}^{(i)}({\alpha}^{(i)})$ denotes the BH rejection threshold at node $i$ with the target FDR ${\alpha}^{(i)}$. 
The following lemma concerns the asymptotic validity of this argument for the BH procedure based on the plug-in estimator $\hat{\alpha}^{(i)}$, i.e., we wish to show that $\tau_{\text{BH}}^{(i)}(\hat{\alpha}^{(i)})\xrightarrow{\Pc} {\tau}^*$ holds as well. To simplify the notation, we drop the superscript $(i)$ in the following lemma. 
\begin{lemma}
Under Assumptions \ref{ass:fixed}, \ref{ass:consis}, and \ref{concav}, if $\hat{\alpha}_m\xrightarrow{a.s.}\alpha$ as $m\to\infty$, then $ \tau_{\text{BH}}(\hat\alpha_m)\xrightarrow{a.s.}\tau_\alpha:=\tau(\alpha;r_0)$, where,
\begin{equation}
    \tau_{\text{BH}}(\hat\alpha_m) = \frac{\hat\alpha_m}{m}\max\big\{0\leq k\leq m:P_{(k)}\leq (k/m)\hat\alpha_m\big\}\  \nonumber
\end{equation}
(with $P_{(0)} = 0$), is the rejection threshold for the BH procedure with estimated target FDR $\hat\alpha_m$. \label{thm:plugin}
\end{lemma}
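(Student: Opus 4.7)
The plan is to work directly with the empirical-CDF characterization of the BH rejection threshold. Observe that
\begin{equation*}
    \tau_{\text{BH}}(\hat\alpha_m) = \sup\bigl\{t\in[0,1]:\,\mathsf{G}_m(t)\geq t/\hat\alpha_m\bigr\},
\end{equation*}
because $P_{(k)}\leq k\hat\alpha_m/m$ is equivalent to $\mathsf{G}_m(k\hat\alpha_m/m)\geq k/m$, and the largest such $k$ selects precisely the sup. Define $H_m(t)=\mathsf{G}_m(t)-t/\hat\alpha_m$ and $H(t)=G(t;r_0)-t/\alpha$. Glivenko--Cantelli together with the hypothesis $\hat\alpha_m\to\alpha$ a.s.\ then yields $\sup_{t\in[0,1]}|H_m(t)-H(t)|\to 0$ a.s. Since $\tau_\alpha=\sup\{t:H(t)=0\}$, the proof reduces to showing that for every $\epsilon>0$, eventually
\begin{equation*}
    \tau_\alpha-\epsilon \;\leq\; \tau_{\text{BH}}(\hat\alpha_m) \;\leq\; \tau_\alpha+\epsilon\qquad\text{a.s.}
\end{equation*}

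For the upper bound I would argue as follows. Because $H(1)=1-1/\alpha<0$ and $\tau_\alpha$ is by definition the largest zero of the continuous function $H$, one has $H(t)<0$ for every $t>\tau_\alpha$. Hence $H$ attains a strictly negative maximum, say $-\delta$, on the compact set $[\tau_\alpha+\epsilon,1]$, and the uniform convergence of $H_m$ to $H$ forces $H_m<0$ on this set eventually, ruling out any element of the supremum set above $\tau_\alpha+\epsilon$.

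The lower bound is where Assumption~\ref{concav} becomes crucial. Differentiating $G(t;r_0)=r_0 t+(1-r_0)F(t)$ and using the identity $1/\alpha=r_0+(1-r_0)\beta^*$ gives
\begin{equation*}
    H'(\tau^*) = (1-r_0)\bigl(F'(\tau^*)-\beta^*\bigr)\neq 0.
\end{equation*}
Since $\tau_\alpha$ is the largest zero of $H$ and $H<0$ immediately above it, the derivative must be strictly negative, and continuous differentiability at $\tau^*$ then yields $H(t)>0$ on some left neighborhood $(\tau_\alpha-\eta,\tau_\alpha)$. Choosing any fixed $t^*$ in that neighborhood, uniform convergence forces $H_m(t^*)>0$ eventually, so $\tau_{\text{BH}}(\hat\alpha_m)\geq t^*>\tau_\alpha-\epsilon$.

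The main obstacle will be this lower bound: converting the pointwise non-tangency hypothesis into a concrete left neighborhood on which $H$ is bounded away from zero from above, and simultaneously absorbing the two perturbations $\mathsf{G}_m\to G$ and $\hat\alpha_m\to\alpha$ within a single uniform estimate. In particular, one must check carefully that $\tau_\alpha$ being the \emph{largest} zero of $H$ (rather than, say, an isolated tangential crossing) forces the sign $H'(\tau^*)<0$ rather than $H'(\tau^*)>0$. Once this is in hand, the sandwich closes and, combining the two sides and letting $\epsilon\downarrow 0$, the a.s.\ convergence $\tau_{\text{BH}}(\hat\alpha_m)\to\tau_\alpha$ follows.
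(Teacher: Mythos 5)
Your proof is correct, but it takes a genuinely different route from the paper's. The paper exploits monotonicity of the BH threshold in the test size: once $|\hat\alpha_m-\alpha|\leq\delta'$ it sandwiches $\tau_{\text{BH}}(\hat\alpha_m)$ between $\tau_{\text{BH}}(\alpha-\delta')$ and $\tau_{\text{BH}}(\alpha+\delta')$, invokes the technical Lemma~\ref{het} at these two \emph{deterministic} test sizes, and then sends $\delta'\downarrow 0$ (which is why it needs the non-tangency condition $F'(\tau_{\breve\alpha})\neq\beta_{\breve\alpha}$ to hold on a whole neighborhood $\mathcal{B}_\delta(\alpha)$, and implicitly the continuity of $\breve\alpha\mapsto\tau_{\breve\alpha}$). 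You instead rewrite the threshold as $\sup\{t:\mathsf{G}_m(t)\geq t/\hat\alpha_m\}$, fold both sources of randomness into a single process $H_m$ converging uniformly to $H$ by Glivenko--Cantelli, and localize at the single point $\tau_\alpha$ via the sign of $H'(\tau_\alpha)$. Your version is more self-contained (no appeal to Lemma~\ref{het}) and needs the non-tangency hypothesis only at $\tau_\alpha$ itself, sidestepping the neighborhood construction and the continuity-in-$\alpha$ step that the paper leaves implicit; the sign argument $H'(\tau_\alpha)<0$ is indeed forced by $H<0$ above the largest zero, exactly as you anticipate. What you give up is the quantitative content: the paper's route inherits Hoeffding-type exponential concentration from Lemma~\ref{het}, which the Proposition later cites (``exponentially fast''), whereas plain Glivenko--Cantelli yields no rate (you could recover one via Dvoretzky--Kiefer--Wolfowitz). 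Two small points to tidy up: handle the degenerate case $\tau_\alpha=0$ separately (the lower bound is then vacuous), and note that your upper-bound step, which deduces $H<0$ on all of $(\tau_\alpha,1]$ from $H(1)<0$ and maximality of the zero, uses continuity of $F$ on $[0,1]$ and not just at $\tau_\alpha$ --- an assumption the paper also makes tacitly.
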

\begin{proof}
According to Assumption \ref{concav}, there exist a $\delta$-neighborhood of $\alpha$, $\mathcal{B}_{\delta}(\alpha)$, such that $F'(\tau_{\breve{\alpha}})\neq \beta_{\breve{\alpha}}$ for all $\breve{\alpha}\in\mathcal{B}_{\delta}(\alpha)$.
Fix some $0<\delta'< \delta$. By the almost sure convergence of $\hat\alpha_m$, we have $|\hat\alpha_m-\alpha|\leq\delta'$ for $m > \widetilde{m}(\omega)$. Hence, for large $m$ we get 
\begin{equation*}
    \tau_{\text{BH}}(\alpha-\delta')\leq \tau_{\text{BH}}(\hat\alpha_m)\leq \tau_{\text{BH}}(\alpha+\delta') \quad \ a.s.
\end{equation*}
Therefore, according to Lemma \ref{het} (with fixed $r_0(m)$) we get 
\begin{equation*}
    (1-\eps)\tau_{\alpha-\delta'}\leq \tau_{\text{BH}}(\hat\alpha_m)\leq (1+\eps)\tau_{\alpha+\delta'} \ a.s.,\ \text{large $m$}
\end{equation*}
for all $\eps>0$ and $0<\delta'< \delta$, completing the proof. 
\end{proof}
\begin{propo}
Under Assumptions \ref{ass:fixed}, \ref{ass:consis}, \ref{concav}, if $\tau^*>0$ then our distributed BH method attains the global performance (i.e., centralized FDR and power) as $m\,\rightarrow\,\infty$ where $m$ denotes the number of p-values in the network.
\end{propo}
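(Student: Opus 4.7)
The plan is to show that every local BH threshold $\tau^{(i)}_{\text{BH}}(\hat\alpha^{(i)})$ and the centralized BH threshold $\tau^*_{\text{BH}}(\alpha)$ share the same a.s./in-probability limit $\tau^*$, then leverage this to conclude that the total numbers of (false) rejections of the distributed and centralized procedures agree in the limit. Once that is established, a bounded convergence argument yields equality of the limiting FDR and power.

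First I would establish threshold convergence. By the first lemma of this section, $\hat\alpha^{(i)} \xrightarrow{a.s.} \alpha^{(i)}$ under Assumptions \ref{ass:fixed} and \ref{ass:consis}. The parameter $\alpha^{(i)}$ is designed precisely so that $\beta(\alpha^{(i)}; r_0^{(i)}) = \beta^*$, whence $\tau(\alpha^{(i)}; r_0^{(i)}) = \sup\{t : F(t) = \beta^* t\} = \tau^*$. Assumption \ref{concav} — which concerns $F$ at the common point $\tau^*$ with slope $\beta^*$ — thereby transfers verbatim to each local problem. Applying Lemma \ref{thm:plugin} at every node then gives $\tau^{(i)}_{\text{BH}}(\hat\alpha^{(i)}) \xrightarrow{a.s.} \tau^*$. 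On the centralized side, Lemma \ref{het} (applied globally with slope $\beta^*$) gives $\tau^*_{\text{BH}}(\alpha) \xrightarrow{\mathcal{P}} \tau^*$.

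Next I would translate threshold convergence into convergence of the rejection counts. Let $R_{\text{dist}} = \sum_i R^{(i)}$, $V_{\text{dist}} = \sum_i V^{(i)}$ denote the distributed totals and $R^*, V^*$ the centralized ones. Writing $R^{(i)}/m^{(i)} = \mathsf{G}_{m^{(i)}}^{(i)}\!\bigl(\tau^{(i)}_{\text{BH}}(\hat\alpha^{(i)})\bigr)$, I would combine Glivenko-Cantelli (uniform in $t$) with the a.s.\ threshold convergence and continuity of $G(\cdot; r_0^{(i)})$ at $\tau^*$ (from Assumption \ref{concav}) to obtain $R^{(i)}/m^{(i)} \xrightarrow{a.s.} G(\tau^*; r_0^{(i)})$. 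Using $m^{(i)}/m \xrightarrow{a.s.} q^{(i)}$ (already established in the proof of the earlier lemma), this yields
\begin{equation*}
\frac{R_{\text{dist}}}{m} \xrightarrow{a.s.} \sum_{i=1}^N q^{(i)} G(\tau^*; r_0^{(i)}) = G(\tau^*; r_0^*) = \frac{\tau^*}{\alpha},
\end{equation*}
and the same argument applied only to the null p-values gives $V_{\text{dist}}/m \xrightarrow{a.s.} r_0^* \tau^*$. Identical reasoning applied to $(R^*, V^*)$ produces the same two limits, and Kolmogorov's SLLN gives $m_1/m \xrightarrow{a.s.} 1 - r_0^*$.

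Finally I would pass to expected values. Since $\tau^* > 0$, the denominators $\max(R,1)$ and $\max(m_1,1)$ are asymptotically bounded away from zero, so the ratios $V/\max(R,1)$ and $(R-V)/\max(m_1,1)$ converge a.s.\ to the common deterministic limits $r_0^* \alpha$ and $F(\tau^*)$ in both settings. Because these ratios are uniformly bounded by $1$, bounded convergence yields $\text{FDR}_{\text{dist}} \to r_0^* \alpha = \text{FDR}_{\text{centralized}}$ and the analogous identity for power. The main technical obstacle is the step combining Glivenko-Cantelli with the random, node-dependent thresholds $\tau^{(i)}_{\text{BH}}(\hat\alpha^{(i)})$ and the plug-in estimates $\hat\alpha^{(i)}$; once handled by the uniform-in-$t$ nature of Glivenko-Cantelli together with continuity of $G$ at $\tau^*$, the remaining steps are routine.
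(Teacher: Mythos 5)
Your proposal is correct and follows essentially the same route as the paper: local threshold convergence via Lemma~\ref{thm:plugin} (using $\beta(\alpha^{(i)};r_0^{(i)})=\beta^*$ so all nodes share the limit $\tau^*$), global threshold convergence via Lemma~\ref{het}, and then bounded convergence of FDP and TDP. The only difference is one of detail: the paper simply asserts that FDP and TDP converge, whereas you explicitly derive the limits $R/m\to\tau^*/\alpha$ and $V/m\to r_0^*\tau^*$ via Glivenko--Cantelli, which is a correct and welcome filling-in of the step the paper leaves implicit.
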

\begin{proof}
By Lemma \ref{thm:plugin}, we have $\tau_{\text{BH}}^{(i)}(\hat{\alpha}^{(i)})\xrightarrow{\Pc} {\tau}^*$. Also, we have $\tau^*_{\text{BH}}(\alpha)\xrightarrow{\mathcal{P}}\tau^*$ according to Lemma \ref{het}. 
The convergence of FDR and power follows from the convergence (in probability) and boundedness of $\text{FDP}=\frac{V}{R\vee 1}$ and $\text{TDP}=\frac{R-V}{m_1\vee 1}$, respectively.
\end{proof}

\section{Simulations}
In this section, we demonstrate our proposed algorithm in various settings. In all the experiments, we set $\alpha = 0.2$ and number of nodes $N=50$. The \emph{estimated} FDR and power are computed by averaging over $200$ trials. 

The samples are distributed according to $\mathcal{N}(0,1)$ under $\mathsf{H_0}$. Under $\mathsf{H_1}$, 
we consider mixture alternatives (i.e., $\mathcal{N}(\mu,1)$ with random $\mu$ and we fix the distribution of $\mu$ for all nodes) and composite alternatives (i.e., we generate samples according to $\mathcal{N}(\mu^{(i)},1)$ at node $i$ with a unique distribution function for $\mu^{(i)}$).
As a reference, we perform the global (referred to as \emph{central} in the plots below) multiple testing by carrying out the BH procedure over all the p-values from all nodes, i.e., $\{\Pv^{1},..., \Pv^{N}\}$. We fix the hyper-parameters $\l=0.5$ for Storey's estimator and $l=0.5$ for the spacing estimator in all our simulations. The empirical performance of the spacing estimator is more stable in comparison with that of Storey's estimator, but the hyper-parameters can potentially be optimized (e.g., one can select $\l$ to minimize the mean-square error of the estimator via bootstrapping~\cite[Section 9]{storey2002direct}). 
\begin{figure}[!h]
\centering
  \includegraphics[scale=0.31]{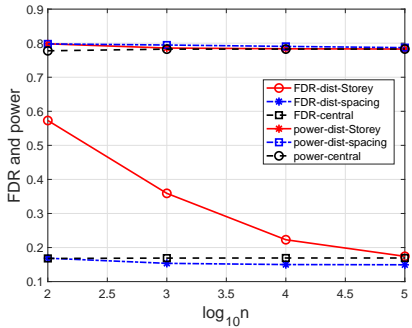}
  \caption{Experiment~1 (Each node owns $m^{(i)}=n$ p-values.).}\vspace{-1em}
\end{figure}

\smallskip
\noindent{{\bf Experiment 1 (Same number of p-values for all nodes).} In this experiment, we set $m^{(i)}=n$ and vary it from $10^2$ to $10^5$. To focus more on the sparse setting (i.e., when $r_1$ is small), we randomly pick $r_1^{(i)}\sim \U[0,0.3]$, and compute the number of alternatives $m^{(i)}_1=\lfloor r_1^{(i)}n\rfloor$. To ensure reproducibility, we fix $r_1^{(i)}=0.3*(i/N)$ for generating the p-values. Then we run $200$ trials, where in each trial, we fix $\mu_{\text{base}}=3$, and generate samples under $\mathsf{H_1}$ by first picking $\mu\sim\U\{[-\mu_{\text{base}}-0.5, -\mu_{\text{base}}+0.5]\cup [\mu_{\text{base}}-0.5, \mu_{\text{base}}+0.5]\}$ and then generating $X^{(i)}_j\sim \mathcal{N}(\mu,1)$.}

\smallskip
\noindent{{\bf Experiment 2 (Different number of p-values for each node).} In this experiment, we vary $n$ from $10^2$ to $10^6$ and randomly sample $m^{(i)}= n^{0.2+0.8A_i}$, where $A_i\sim\U[0,1]$. To ensure reproducibility, we fix $m^{(i)}= n^{0.2+0.8*(i/N)}$ for generating the p-values. The setting is otherwise the same as in Experiment~1.}

\begin{figure}[!h]
\centering
  \includegraphics[scale=0.31]{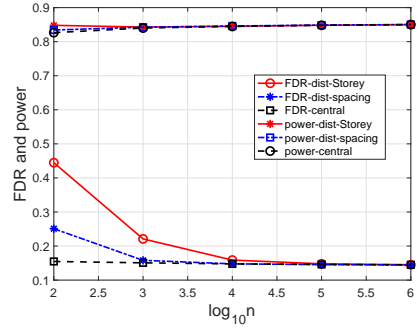}
  \caption{Experiment~2 (Different $m^{(i)}$ for each node).}\vspace{-1em}
\end{figure}

\smallskip
\noindent{{\bf Experiment 3 (Vary $\mu$).} In this experiment, we consider the opposite case. We fix $m^{(i)}$'s by setting $n=10^4$ and pick $m^{(i)}= n^{0.2+0.8*(i/N)}$. We vary $\mu_{\text{base}}$ from $2$ to $5$, and generate samples under $\mathsf{H_1}$ in the same way as in Experiment~1.}

\begin{figure}[!h]
\centering
  \includegraphics[scale=0.31]{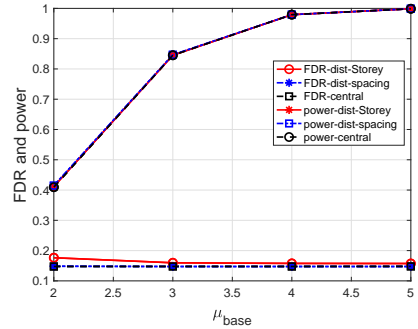}
  \caption{Experiment~3 (Vary $\mu$).}\vspace{-1em}
\end{figure}

\smallskip
\noindent{{\bf Experiment 4 (Heterogeneous alternatives).} We consider a setting where each node~$i$ generates samples according to a unique $\mu^{(i)}_{\text{base}}$ under $\mathsf{H_1}$. Specifically, we fix $\mu_{\text{base}}^{(i)}=2+i/N$ and pick $\mu^{(i)}\sim\U\{[-\mu_{\text{base}}^{(i)}-0.5, -\mu_{\text{base}}^{(i)}+0.5]\cup[\mu_{\text{base}}^{(i)}-0.5, \mu_{\text{base}}^{(i)}+0.5]\}$. We set $m^{(i)}= n^{0.2+0.8*(i/N)}$. And then we fix $r_1^{(i)}$, and vary $n$ from $10^3$ to $10^6$ to generate samples under $\mathsf{H_1}$ in the same way as in Experiment~1.}

\begin{figure}[!h]
\centering
  \includegraphics[scale=0.31]{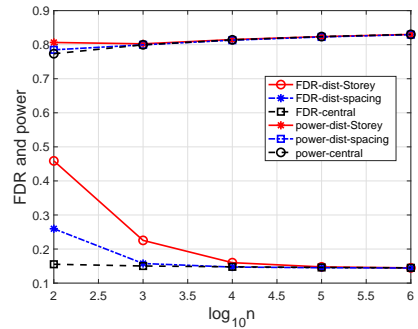}
  \caption{Experiment~4 (Heterogeneous alternatives).}\vspace{-1em}
\end{figure}

\smallskip

\noindent{{\bf Experiment 5 (Dependent p-values).} Finally, we evaluate the robustness of our method by considering dependent p-values. In particular, we adopt two commonly used covariance structures: (I) $\Sigma_{i,j} = \rho^{|i-j|}$, and (II) $\Sigma_{i,i} = 1$, $\Sigma_{i,j} = \rho\cdot\ind (\lceil i/20\rceil=\lceil j/20\rceil)$, where $\ind(\cdot)$ denotes the indicator function. We vary $\rho$ from $0$ to $0.8$ and fix $n=10^3$ for $m^{(i)}= n^{0.2+0.8*(i/N)}$ and $\mu_{\text{base}}=3$. The setting is otherwise the same as in Experiment~1.}

\begin{figure}[!h]
\centering
  \includegraphics[scale=0.31]{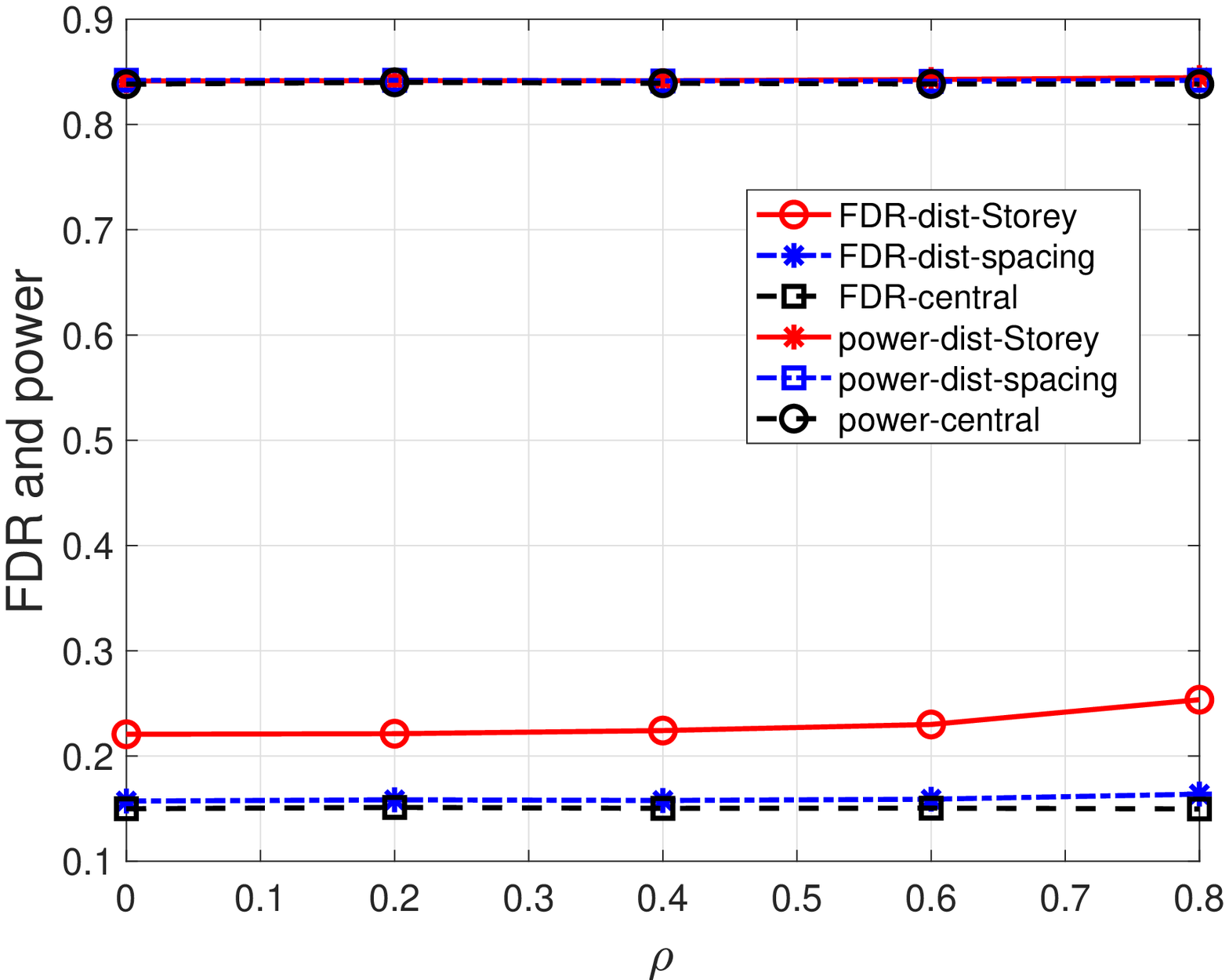}
  \caption{Experiment~5 with covariance structure (I).}\vspace{-1em}
\end{figure}

\begin{figure}[!h]
\centering
  \includegraphics[scale=0.31]{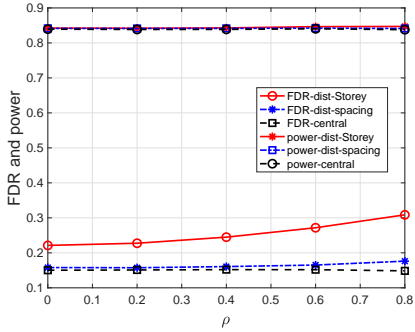}
  \caption{Experiment~5 with covariance structure (II).}\vspace{-1.3em}
\end{figure}

\section{Discussion}
In this work, we have initiated a methodology for distributed large-scale multiple testing which is based on (one-shot) aggregation of proportions of the true nulls at a center node. Our simulations with Gaussian statistics show that the method is robust to deviations from the assumptions. 
Considering the minimal communication budget we allow, the only potential competitor is the no-communication (or no-aggregation) method, which has the asymptotic FDR control property.
We believe that our algorithm can improve significantly upon the power of the no-communication method (in both local and global senses) for some challenging cases. We leave these comparisons for the extended version of this work. 
\appendix

Consider performing the BH procedure on $m$ p-values generated $\iid$ according to $G(t;r_0)$ (defined in Section II-B). The following technical lemma concerns relaxing the assumptions of the  Theorem~1 in~\cite{genovese2002operating} in two directions: (I) $F$ is not assumed to be concave and multiple solutions to $F(t) = \beta(\alpha;r_0)\, t$ can exist, (II) $r_0(m)$ is not assumed to be fixed for each $m$. 


\begin{lemma}
Suppose $r_0(m)\rightarrow r_0$ and fix $\alpha\in(0,1)$. Define $\beta_{r_0}:=\beta(\alpha;r_0)$ and $\tau_{r_0}:=\tau(\alpha; r_0)=\sup\{t:F(t) = \beta_{r_0}\, t\}$. If $F(t)$ is continuously differentiable at $t=\tau_{r_0}$ and $F'(\tau_{r_0})\neq {\beta}(\alpha;r_0)$, then 
$\tau_{\text{BH}}(\alpha;r_0(m))\xrightarrow{a.s.} \tau_{r_0}$. \label{het}
\end{lemma}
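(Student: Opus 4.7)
}
The plan is to express the BH threshold through the empirical CDF, reduce the problem to a uniform convergence statement, and then exploit the transversality of the crossing at $\tau_{r_0}$.

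\emph{Step 1 (reformulation).} I would begin by writing
\[
\tau_{\text{BH}}(\alpha;r_0(m)) \;=\; \sup\bigl\{t\in[0,1] : \mathsf{G}_m(t)\geq t/\alpha\bigr\},
\]
where $\mathsf{G}_m$ is the empirical CDF of the $m$ p-values drawn i.i.d.\ from $G(\cdot;r_0(m))$. This is the standard reinterpretation of the BH rule as the largest crossing of the empirical CDF with the line $t/\alpha$, and it makes the problem a matter of locating the rightmost zero of $\mathsf{G}_m(t)-t/\alpha$.

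\emph{Step 2 (uniform convergence).} Next I would establish that $\mathsf{G}_m\to G(\cdot;r_0)$ uniformly on $[0,1]$ almost surely, even though the data-generating parameter $r_0(m)$ drifts. Glivenko--Cantelli (or the DKW inequality, which additionally yields the exponential rate) gives $\sup_t|\mathsf{G}_m(t)-G(t;r_0(m))|\xrightarrow{a.s.}0$, while the explicit form $G(t;r)=rt+(1-r)F(t)$ yields $|G(t;r_0(m))-G(t;r_0)|\leq |r_0(m)-r_0|$ uniformly in~$t$. Combining these two bounds with the triangle inequality gives the desired uniform a.s.\ convergence.

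\emph{Step 3 (transversality at $\tau_{r_0}$).} Define $h(t):=G(t;r_0)-t/\alpha$. A short algebraic check shows $h(\tau_{r_0})=0$ and
\[
h'(\tau_{r_0}) \;=\; (1-r_0)\bigl(F'(\tau_{r_0})-\beta_{r_0}\bigr),
\]
which is nonzero by hypothesis. Because $\tau_{r_0}$ is the \emph{supremum} of the set $\{t:F(t)=\beta_{r_0}t\}$, we have $h(t)<0$ for $t>\tau_{r_0}$; together with $h'(\tau_{r_0})\neq 0$ this forces $h'(\tau_{r_0})<0$ and hence $h(\tau_{r_0}-\delta)>0$ for all sufficiently small $\delta>0$. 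This is the key structural consequence of Assumption~\ref{concav} that relaxes the concavity assumption of~\cite{genovese2002operating}: we do not need $h$ to be monotone globally, only to cross strictly downward at its largest zero.

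\emph{Step 4 (sandwich).} Fix $\delta>0$ small. For the upper bound, continuity of $h$ and the fact that $h(t)<0$ strictly on $[\tau_{r_0}+\delta,1]$ (a compact set, by the sup characterization of $\tau_{r_0}$) gives $\eta:=-\sup_{t\in[\tau_{r_0}+\delta,1]}h(t)>0$; then Step~2 forces $\mathsf{G}_m(t)<t/\alpha$ uniformly on $[\tau_{r_0}+\delta,1]$ for large $m$ a.s., hence $\tau_{\text{BH}}\leq\tau_{r_0}+\delta$. For the lower bound, Step~3 gives $h(\tau_{r_0}-\delta)>0$, and Step~2 then yields $\mathsf{G}_m(\tau_{r_0}-\delta)\geq(\tau_{r_0}-\delta)/\alpha$ for large $m$, so that $\tau_{r_0}-\delta$ lies in the sup-set and $\tau_{\text{BH}}\geq\tau_{r_0}-\delta$. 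Sending $\delta\downarrow 0$ along a countable sequence completes the a.s.\ convergence.

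The main obstacle I expect is Step~3: without concavity of $F$, there can be additional zeros of $h$ to the left of $\tau_{r_0}$, and one must argue carefully that the sup-characterization of $\tau_{r_0}$ combined with $F'(\tau_{r_0})\neq\beta_{r_0}$ is exactly what rules out the pathological tangential case and forces $h$ to change sign strictly at $\tau_{r_0}$. Once that is in place, the remaining steps are standard empirical-process sandwiching.
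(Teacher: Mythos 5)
Your proposal is correct, but it reaches the conclusion by a genuinely different route than the paper. The paper never passes to the empirical-process formulation: it works with the deciding index $D_{\text{BH}}=(m/\alpha)\,\tau_{\text{BH}}$, applies Hoeffding's inequality to the counts $\sum_i\ind\{P_i\le (k/m)\alpha\}$ together with a union bound over all indices $k>b_m$ (and a single application at $k=a_m$), Taylor-expands $k-\mu(k)$ around the limiting crossing point with a shrinking window $\eps_m=1/\log m$, and concludes by summability and Borel--Cantelli; the drift $r_0(m)\to r_0$ is absorbed by comparing against $\tau_{r_0\pm\delta'}$ and $\beta_{r_0\pm\delta'}$. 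Your version replaces all of this with one uniform bound, $\sup_t|\mathsf{G}_m(t)-G(t;r_0(m))|\to 0$ a.s.\ via DKW (correctly handling the triangular array) plus the elementary estimate $|G(t;r_0(m))-G(t;r_0)|\le|r_0(m)-r_0|$, and then a deterministic sandwich. What your route buys is economy and a cleaner treatment of the drift: the paper's first step asserts a neighborhood of $r_0$ on which $F'(\tau_{\breve r_0})\neq\beta_{\breve r_0}$, which tacitly requires continuity of $\breve r_0\mapsto\tau_{\breve r_0}$ and is not argued, whereas your direct CDF comparison avoids needing transversality anywhere except at the single point $\tau_{r_0}$. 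What the paper's route buys is a quantitative rate (exceptional probabilities $\precsim m\,e^{-cm/(\log m)^2}$ for windows shrinking like $1\pm 1/\log m$), which is what licenses the ``exponentially fast'' remark in Section~III; your Step~2 with DKW would also yield exponential bounds for fixed $\delta$, so nothing essential is lost. Two small points to tighten: your Step~3 sign argument ($h<0$ strictly to the right of $\tau_{r_0}$) needs $h(1)=1-1/\alpha<0$ together with the intermediate value theorem, hence continuity of $F$ on $(\tau_{r_0},1]$ --- an implicit assumption the paper's proof shares when it claims $\beta_{r_0-\delta'}t-F(t)>0$ for all $t>\tau_{r_0-\delta'}$; and your reduction in Step~1 of $\tau_{\text{BH}}$ to $\sup\{t:\mathsf{G}_m(t)\ge t/\alpha\}$ deserves a one-line justification, though it is standard and appears in the cited Genovese--Wasserman paper.
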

\begin{proof}
We follow the approach in \cite{genovese2002operating} and highlight the main differences. According to the continuous differentiability of $F(t)$ at $t=\tau_{r_0}$, there exist an open $\delta$-neighborhood of $r_0$, $\mathcal{B}_{\delta}(r_0)$ such that $F'({\tau}_{\breve{r}_0})\neq {\beta}_{\breve{r}_0}$ for all $\breve{r}_0\in\mathcal{B}_{\delta}(r_0)$.
Take some $0 < \delta' <\delta$ and pick $m'$ such that $|r_0(m)-r_0|\leq\delta'$ for $m\geq m'$. Let $\eps_m=(\log m)^{-1}$, $m\geq 3$.
Define
\begin{equation*}
    a_m = \frac{m\tau_{r_0+\delta'}}{\alpha}(1-\eps_m)\quad\text{and}\quad b_m = \frac{m\tau_{r_0-\delta'}}{\alpha}(1+\eps_m)\ .
\end{equation*}
Let $D_{\text{BH}} = m/\alpha\, \tau_{\text{BH}}$ denote the BH deciding index, or equivalently, the number of rejections made by the BH procedure. First, we show $\mathbb{P}(D_{\text{BH}} > b_m)\rightarrow 0$ as follows,
\begin{align}
    &\mathbb{P}(D_{\text{BH}} > b_m) \\
    &= \mathbb{P}\Bigg(\underset{k\,>\,b_m}{\bigcup}\big\{P_{(k)}\leq (k/m)\alpha\big\}\Bigg) \\
    &= \mathbb{P}\Bigg(\underset{k\,>\,b_m}{\bigcup}\bigg\{\sum_{i=1}^m \ind\big\{P_i\leq (k/m)\alpha\big\}\geq k\bigg\}\Bigg)\nonumber\\
    &\leq \sum_{k\,>\,b_m}\mathbb{P}\Bigg[\bigg(\sum_{i=1}^m \ind\big\{P_i\leq (k/m)\alpha\big\}\bigg)-\mu(k)\geq k-\mu(k)\Bigg],\nonumber
\end{align}
where 
\begin{align}
  \mu(k) & =  \mathbb{E}\Big(\sum_{i=1}^m \ind\big\{P_i\leq (k/m)\alpha\big\}\Big)\\
  & =r_0(m)\,m\,(k/m)\alpha + (1-r_0(m))\,m F\big((k/m)\alpha\big)\ .\nonumber
\end{align}
We observe that
\begin{align}\label{diff}
     k-&\mu(k) = m r_1(m)\Big[\frac{\alpha k}{m}\beta_{r_0(m)}-F\big(\frac{\alpha k}{m}\big)\Big].
\end{align}
We note that $\beta_{r_0-\d'}\,t-F(t)>0$ for all $t>\tau_{r_0-\delta'}$ according to the definition of $\tau_{r_0-\delta'}$.
Therefore, 
\begin{equation*}
    k-\mu(k)\geq m (r_1-\delta')\Big[\frac{\alpha k}{m}\beta_{r_0-\d'}-F\big(\frac{\alpha k}{m}\big)\Big]=:h(k)>0
\end{equation*}
for $k\geq b_m$ and large $m$.
Also, according to Taylor's theorem we have
\begin{equation}
    h(b_m)= m(r_1-\delta')\Big[\tau_{r_0-\d'}\,\eps_m\big(\beta_{r_0-\d'}-F'(\tau_{r_0-\d'})\big)+o(\eps_m)\Big]\nonumber \ .
\end{equation}
We observe,
\begin{align*}
    \underset{k > b_m}{\inf}h(k)/m &=  (r_1-\delta')\underset{k:\frac{\alpha k}{m}>\frac{\alpha b_m}{m}}{\inf} \Big[\frac{\alpha k}{m}\beta_{r_0-\d'}-F\big(\frac{\alpha k}{m}\big)\Big]\\
    &\geq (r_1-\delta')\underset{t>\tau_{r_0-\delta'}(1+\eps_m)}{\inf} \big(\beta_{r_0-\d'}t-F(t)\big).
\end{align*}
Recall that $\beta_{r_0-\d'}\,t-F(t)>0$ for all $t>\tau_{r_0-\delta'}$. But,
\begin{align*}
    (r_1-\delta')\Big[\beta_{r_0-\d'}\tau_{r_0-\delta'}(1+\eps_m)&-F(\tau_{r_0-\delta'}(1+\eps_m))\Big]\\
    &=\frac{h(b_m)}{m}=o(1).
\end{align*}
Hence, for large enough $m$ we get 
\begin{equation*}
    \underset{k > b_m}{\inf}h(k)/m\geq {h(b_m)}/{m}>0,
\end{equation*}
and as a result $h(k)\geq h(b_m)$ for all $k>b_m$.
Since $F'(\tau_{r_0-\d'})<\beta_{r_0-\d'}$ and $\eps_m=1/\log m$, we get $h(b_m)/\sqrt{m}\asymp \sqrt{m}/\log m$.
Hence, by Hoeffding's inequality we get
\begin{align}
    \mathbb{P}(D_{\text{BH}} > b_m) 
    &\leq \sum_{k\,>\,b_m}e^{-{2h(k)^2}/{m}}
    \precsim m\, e^{-{2h(b_m)^2}/{m}}\nonumber\\
    &\asymp m\, e^{-cm/{(\log m)}^2}\rightarrow 0.\label{UB1}
\end{align}
for some constant $c>0$.
Similarly, we note
\begin{align*}
    \mu &(a_m)-a_m \geq m (r_1-\delta')\Big[F\big(\frac{\alpha a_m}{m}\big)-\frac{\alpha a_m}{m}\beta_{r_0+\d'}\Big]\\
    &= m(r_1-\delta')\Big[\tau_{r_0+\d'}\,\eps_m\big(\beta_{r_0+\d'}-F'(\tau_{r_0+\d'})\big)+o(\eps_m)\Big]
\end{align*}
for large $m$.
Since $F'(\tau_{r_0+\d'})<\beta_{r_0+\d'}$, we get $m^{-1/2}\big(\mu\,(a_m)-a_m\big)\rightarrow\infty$ as $m\rightarrow\infty$. Hence,
\begin{align}
    \mathbb{P}(D_{\text{BH}} < a_m) 
    &\leq \mathbb{P}\Big(P_{(a_m)}> (a_m/m)\alpha\Big)\nonumber \\
    &\leq \mathbb{P}\bigg(\sum_{i=1}^m \ind\big\{P_i\leq (a_m/m)\alpha\big\}< a_m\bigg)\nonumber\\
    &\overset{(a)}{\leq} \exp\bigg(-\frac{2(a_m-\mu(a_m))^2}{m}\bigg)\nonumber\\
    &\precsim\, e^{-c' m/{(\log m)}^2}\label{UB2}
    \rightarrow 0\ ,
\end{align}
for some constant $c'>0$, where $(a)$ follows from Hoeffding's inequality. Since the upper bounds in \eqref{UB1} and \eqref{UB2} are summable in $m$, we get $a_m\leq D_\text{BH}\leq b_m,\ a.s.$ for all $0<\d'<\d$ and large $m$ by the Borel-Cantelli lemma, completing the proof. 
\end{proof}
\balance


\bibliographystyle{IEEEtran}
\bibliography{ref.bib}

\end{document}